\newcommand{\myparagraph}[1]{\vspace*{.8ex}\noindent #1}
\newtheorem{theorem}{Theorem}
\DeclareMathOperator{\length}{len}
\DeclareMathOperator{\lFont}{fs}
\DeclareMathOperator{\lName}{rn}
\DeclareMathOperator{\lStroke}{st}
\DeclareMathOperator{\lColor}{co}
\DeclareMathOperator{\covered}{cov}
\newcommand{\MaxTotalCovering}
{\textsc{Max\-Labeled\-Roads}\xspace}
\newcommand{\ILong}{\textsc{Issue~3}\xspace}
\newcommand{\ILanes}{\textsc{Issue~1}\xspace}
\newcommand{\ICrossings}{\textsc{Issue~2}\xspace}
\newcommand{\IFatLabels}{\textsc{Issue~4}\xspace}
\newcommand{\RuleA}{\textit{Rule~1}\xspace}
\newcommand{\RuleB}{\textit{Rule~2}\xspace}
\newcommand{\RuleC}{\textit{Rule~3}\xspace}
\newcommand{\RuleD}{\textit{Rule~4}\xspace}
\newcommand{\GreedyAlgo}{\textsc{Base\-Line}\xspace}
\newcommand{\TreeAlgo}{\textsc{Tree}\xspace}
\newcommand{\ILPAlgo}{\textsc{Milp}\xspace}
\newcommand{\Shredder}{\textsc{D\&C}}
\title{An Algorithmic Framework for Labeling Road Maps}
\author{Benjamin Niedermann\thanks{Institute of Theoretical Informatics, Karlsruhe Institute of Technology, Germany} 
  \and Martin~N\"ollenburg\thanks{Algorithms and Complexity Group, TU Wien, Vienna, Austria} 
} 
\date{}
\begin{document}
\maketitle

\begin{abstract} 
  Given an unlabeled road map, we consider, from an algorithmic
  perspective, the cartographic problem to place non-overlapping road
  labels embedded in their roads.  We first decompose the road
  network into logically coherent road sections, e.g., parts of
  roads between two junctions.  Based on this decomposition, we present and implement a new and
  versatile framework for placing labels in road maps such that the
  number of labeled road sections is maximized. 
  In an experimental evaluation with road maps of 11 major cities we show that our proposed labeling algorithm is both fast in practice and that it reaches near-optimal solution quality, where optimal solutions are obtained by mixed-integer linear programming. In comparison to the standard OpenStreetMap renderer Mapnik, our algorithm labels~$31\%$ more road sections in average.
  %
\end{abstract}

\section{Introduction}
Due to the increasing amount of geographic data and its steady change,
automatic approaches become more and more important in the area of
cartography. This particularly applies to the time-consuming and demanding task of label
placement and much research has been done on its automation.
Badly placed labels of feature of interest make maps easily
unreadable~\cite{imhof}. Depending on the type of map feature the label
placement is done differently. For \emph{point features} (e.g., cities
on small-scale maps) labels are typically placed closely to that
feature, while for \emph{line features} (e.g., roads, rivers) the name
is either placed along or inside the feature. The latter approach is
also used for \emph{area features} (e.g., lakes). Independently of the applied technique and feature type, labels should not overlap each other and
clearly identify the features~\cite{criteria}. 

The cartographic label placement problem has also attracted the interest of researchers in computational geometry and it has been thoroughly investigated from both the practical and theoretical perspective~\cite[Chapter
58.3.1]{overview},~\cite{bibliography}.
While algorithms for labeling point features
got a lot of attention, much less work has been done on line
features and area features. In this paper we address
labeling line features, namely labeling the entire road network of a road map.  We take an algorithmic, mathematical perspective on the underlying optimization
problem and build up on our recent theoretical results for labeling tree-shaped networks~\cite{rlTheory}.  We apply the quality
criteria for label placement in road maps elaborated by
Chirié~\cite{street-name-placement} based on interviews with
cartographers. They include that (C1) labels are placed inside and
parallel to the road shapes, (C2) every road section between two
junctions should be clearly identified, and (C3) no two road labels
may intersect. Similar criteria have been described in a classical paper by~Imhof~\cite{imhof}. 

 Variations of embedded labels have been considered in road maps
 before. Chirié~\cite{street-name-placement} and Strijk et
 al.~\cite[Ch. 9]{strijk2001} presented simple, local heuristics that
 place non-overlapping labels based on a discrete set of candidate
 positions -- in contrast we consider the problem globally applying a
 continuous sliding model.  Seibert and Unger~\cite{labelingManhattan}
 utilized the geometric properties of grid-based road networks and
 proved that it is NP-complete to decide whether at least one label
 can be placed for each road. For the same grid-based setting Neyer
 and Wagner~\cite{downtownLabeling} evaluated a practically efficient
 algorithm that is not applicable for general road networks.

Road labeling with embedded labels has also been considered for
interactive and dynamic maps. Maass and Döllner~\cite{Maass07} provided a
heuristic for labeling interactive 3D road maps taking
obstacles into account. Vaaraniemi et al.~\cite{Vaaraniemi12} presented a
study on a force-based labeling algorithm for dynamic maps considering
both point and line features. Schwartges et al.~\cite{swh-lsime-acm14}
investigated embedded labels in interactive maps allowing panning,
zooming and rotation of the map. They evaluated a simple
heuristic for maximizing the number of placed labels. 

In contrast, non-embedded labels are typically considered for single
line features such as rivers. Edmondson et
al.~\cite{Edmondson96} presented an algorithm for placing straight
labels along single line features. Wolff et al.~\cite{wkksa-seahq-00}
also considered the case that labels may bend. Recently, Schwartges et
al.~\cite{Schwartges2015} used \emph{billboards} (labels with short leaders) for naming roads in interactive 3D maps to avoid label~distortion.

\begin{figure}[t]
\centering
\includegraphics[scale=7.0]{./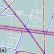}
\includegraphics[scale=7.0]{./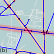}
\includegraphics[scale=7.0]{./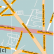}
\includegraphics[scale=7.0]{./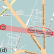}
\caption{The presented workflow. (a) The road network given by
  polylines (blue segments). (b) Phase 1:
  A graph~$G$ is created whose embedding is the simplified road
  network; blue segments: road sections, red segments: junction
  edges. (c) Phase 2: Creating the
  labeling using~$G$. (d)~A labeling produced
  by the OSM renderer Mapnik. The six labels of road \emph{Osloer Straße} are enclosed by red ellipses. }
\label{fig:motivation}
\end{figure}

\begin{figure}[t]
\centering
\subfigure[Google Maps:  Hermann-Vollmerstraße, Karlsruhe (Germany)]{\includegraphics[page=2,scale=1.0]{./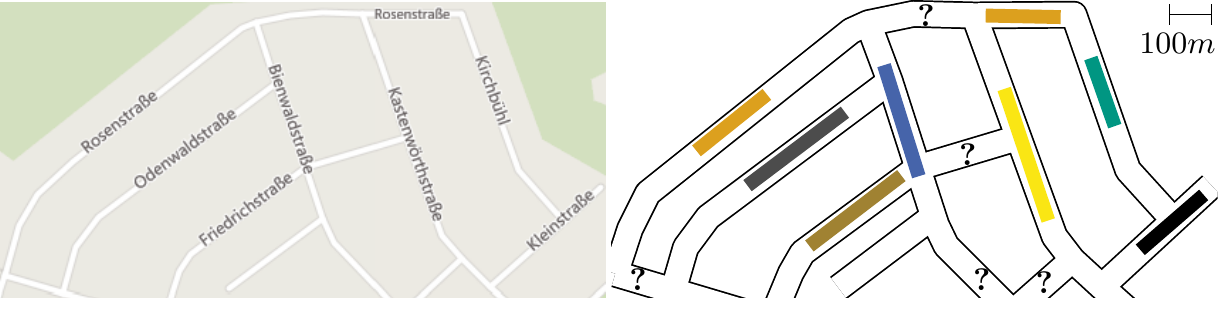}\label{example:google}}
\subfigure[Bing Maps: Kirchbuel, Karlsruhe (Germany)]{\includegraphics[scale=1.0]{./fig/example.pdf}\label{example:bing}}
\caption{Two maps of the map services Google Maps and Bing Maps. Left: Screen shot of a road map. Right: The redrawn road map to emphasize the labels. Labels of the same color belong to the same road. \subref{example:google} Labels are placed tightly packed in a row. While some roads have more labels then necessary, other roads are not labeled. \subref{example:bing} Road map consists of many unlabeled road~sections.}
\label{fig:webservices}
\end{figure}

For labeling point features a typical objective is to maximize the
number of non-overlapping placed labels, because every placed label
enhances the map with further information. While this is mostly
true for point features, maximizing the number of labels is not the
right objective for label placement of roads since not every label
that is placed necessarily contributes more information to the
map. For example, consider the placed labels of the road \emph{Osloer Straße}
in Fig.~\ref{fig:motivation}(d).  We can
easily remove some of those labels without losing any information,
because the map user can still identify the same road sections; see
Fig.~\ref{fig:motivation}(c). In online map services,
however, one often can find such redundant labels; see
Fig.~\ref{fig:webservices} for two examples.  Some
roads may have unnecessarily many labels, which may in turn cause others to remain completely
unlabeled. Hence, the user
cannot identify such roads on the map, a real disadvantage if headed
for that road.  Due to these observations we do not aim to maximize
the number of labels, but the number of labeled \emph{road
  sections}. For the purpose of this paper, a \emph{road section}
forms a connected piece of the road network that logically belongs together,
e.g., a part of a road between two junctions or a part that stands out
by its color or width. Our algorithm, however, is independent of the actual definition of road sections; any partition of the road network into disjoint road sections can be handled. We say that a road section is \emph{labeled}
if a label (partly) covers it.

As the underlying model for maximizing labeled road sections we re-use the planar graph model that has been introduced in our theoretical companion paper~\cite{rlTheory}. In that paper we further proved that labeling a maximum number of road sections is
NP-hard, even for planar graphs and if no road consists of multiple branches. However, we presented a polynomial-time algorithm for the case that the road graph
is a tree.  While the result for trees is mostly of theoretic
interest (road networks rarely form trees), we will show in this paper
that our tree-based algorithm can in fact be used successfully as the
core of an efficient and practical road labeling algorithm that
produces near-optimal solutions.


\paragraph{Contribution \& Outline.} We introduce a
new, versatile algorithmic framework for placing non-overlapping
labels in road networks maximizing the number of labeled road sections.  We
keep the algorithmic components easily exchangeable.  In
Sect.~\ref{sec:model} we discuss and expand the model introduced
in~\cite{rlTheory}. Afterwards, we present a workflow for labeling road networks consisting of two phases; see Fig.~\ref{fig:motivation}.

\textit{Phase 1 (Sect.~\ref{sec:transformation}).} We translate the
given road network into a semantic representation (an abstract road
graph) that identifies pieces of the road network that belong
semantically together. To that end, we simplify the road network,
e.g., we merge lanes closely running in parallel. By
  design this simplification maintains the overall geometry of the road
  network and only merges structures in the data that should not be labeled independently. Phase~1 is not part of the labeling optimization process.

\textit{Phase 2 (Sect.~\ref{sec:labeling}).} Based on
the abstract road graph, we create an actual labeling using one of three  presented
algorithms: a naive base-line algorithm, a heuristic
extending our tree-based algorithm~\cite{rlTheory} and a
 mixed-integer linear programming (MILP) formulation. 


 As proof of concept we implemented the core of the framework only taking 
    the most important cartographic criteria into account. However, with
   some engineering it can be easily enhanced to more complex models,
   e.g., enforcing minimum distances between labels, abbreviating road names, or using alternative definitions of road sections.  In
 Sect.~\ref{sec:evaluation} we present a detailed evaluation of our
 framework on 11 sample city maps.  Due to its availability and
 popularity in practice, we compare our results against the standard
 OpenStreetMap (OSM) renderer Mapnik as a representative of local
 heuristics; it uses a strategy similar
 to~\cite{street-name-placement,strijk2001}. We show that our
 tree-based algorithm is fast and yields near-optimal labelings that
 improve upon Mapnik by $31\%$ on average.


\begin{figure}[htb]
\centering
\includegraphics[scale=1.2]{./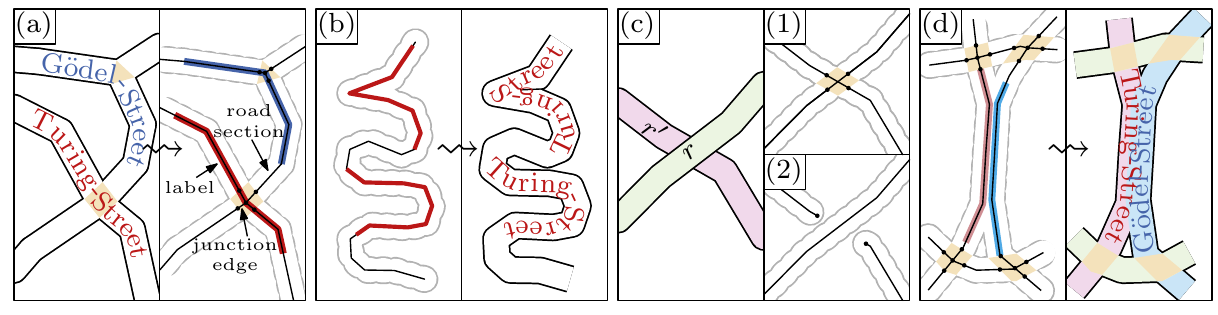}
\caption{Illustration of model and arising issues. (a) Sketch of a road network and its abstract road graph.  (b) Labels are possibly curvy and have sharp bends making the text hardly legible. (c) \ICrossings: Two ways to represent bridges and tunnels in the abstract road graph. (d) \IFatLabels: The text representation of labels  may overlap, while the curve representation in the abstract road graph does not.}
\label{fig:model-issues}
\end{figure}

\section{Semantic Representation of Road Networks}\label{sec:model}


At any given scale, road networks are typically drawn as follows. Each
road or road lane is represented as a thick, polygonal curve, i.e.,
a polygonal curve with non-zero width; see the background of Fig.~\ref{fig:motivation}(a). If two (or more) such curves intersect, they form
junctions. If two or more lanes of the same road closely run in parallel they merge to one even
thicker curve such that individual lanes become indistinguishable.  We
then want to place road labels inside those thick curves. More
precisely, a \emph{road label} can again be represented as a thick
curve (the bounding shape of the road name) that is contained in and
parallel to the thick curve representing its road; see~Fig.~\ref{fig:motivation}(c).

For the purpose of this paper it is sufficient to use a simplified representation, which represents the road network
and its labels as thin curves instead~\cite{rlTheory}. More precisely, a road
network is modeled as a planar embedded \emph{abstract road graph} whose edges
correspond to the skeleton of the actual thick curves.  In this model a
label is again a thin curve of certain length that is contained in the
skeleton. Following the cartographic quality criteria~(C1)--(C3),
we want to place labels, i.e., find sub-curves of the skeleton,
such that
\begin{inparaenum}[(1)]
\item each label starts and ends on road sections, but not on junctions,
\item no two labels overlap, and
\item a maximum number of road sections are labeled. 
\end{inparaenum}
Requiring that labels end on road sections avoids ambiguous placement
of labels in junctions where it is otherwise unclear how the road passes through
it. Note that this does not forbid labels across junctions. From a
labeling of the abstract road graph it is straight-forward to
transform each label back into its \emph{text representation} by
placing the individual letters of each label along the thick curves; see
Fig.~\ref{fig:model-issues}(a).

\paragraph{Abstract Road Graph Model.}
We have introduced the abstract road graph in~\cite{rlTheory}, but for
the convenience of the reader we repeat it here, see also
Fig.~\ref{fig:motivation}(b) and Fig.~\ref{fig:model-issues}(a).   A road
network (in an abstract sense) is a planar geometric \emph{graph}
$G=(V,E)$, where each vertex $v\in V$ has a position in the plane and
each edge~$\{u,v\}\in E$ is represented by a polyline whose end points
are $u$ and $v$.  Each edge further has a \emph{road name}. A
maximal connected subgraph of $G$ consisting of edges with the same name
forms a \emph{road}~$R$. The length of the name of $R$ is denoted by~$\lambda(R)$.
Each edge $e \in E$ is either a \emph{road section}, i.e., the part of a road in between two junctions, or a \emph{junction edge}, which
models road junctions. Formally, a \emph{junction} is a maximal connected
subgraph of~$G$ that only consists of junction edges. We require that no two road sections in $G$ are incident to the same vertex and that vertices incident to road sections have at
most degree~2. Thus, the road graph $G$ decomposes into road sections, separated by junctions.

We say a point~$p$ lies on $G$, if there is an edge $e\in E$ whose
polyline contains~$p$. Hence, a polyline~$\ell$ (in particular a
single line segment) lies on~$G$ if each point of~$\ell$ lies
on~$G$. Further, $\ell$ \emph{covers} $e$, if there is a point
of~$\ell$ that lies on~$e$. If each point of $e$ is covered by~$\ell$,
$e$ is \emph{completely covered}.  The \emph{geodesic distance} of two points
on~$G$ is the length of the shortest polyline on $G$ connecting both
points.

A \emph{label} of a road~$R$ is a simple open polyline~$\ell$ on~$G$
that has length~$\length(\ell) = \lambda(R)$, ends on road sections
of~$G$, and whose segments only lie on edges of~$R$.  The start point
of $\ell$ is denoted as the \emph{head} $h(\ell)$ and the endpoint as
the \emph{tail} $t(\ell)$.  Obviously, the edges that are covered by
$\ell$ form a path~$\mathcal P_{\ell}=(e_1,e_2,\cdots,e_{k-1},e_k)$
such that $e_1$, and $e_k$ are (partly) covered and
$e_2,\dots,e_{k-1}$ are completely covered by~$\ell$. If $e_i$ is a road section (and not a junction
edge), we say that~$e_i$ is \emph{labeled}~by~$\ell$.

We extend the above abstract road graph model and restrict ourselves to \emph{well-shaped} labels, i.e., labels that are not too curvy or do not contain broken type setting due to sharp bends; see Fig.~\ref{fig:model-issues}(b). Similar to
  Schwartges et al.~\cite{swh-lsime-acm14}, we apply a local criterion
  to decide whether a label is well-shaped.
To that end, we define a label $\ell$ to be \emph{well-shaped} if for each covered
edge $e\in \mathcal P_\ell$ there is a \emph{well-shaped} piece of~$e$
that completely contains the part of $\ell$ on~$e$.  Further, we
require that for each pair of incident edges of $\mathcal P_\ell$ the
bend angle is at most $\alpha_\text{max}$, where $\alpha_\text{max}$
is a pre-defined constant. We redefine a \emph{labeling}~$\mathcal L$
to be a set of mutually non-overlapping, well-shaped labels.
Our theoretic results~\cite{rlTheory} remain valid for this
restriction. In particular only few minor technical adaptions are required for the tree labeling algorithm.

In order to identify \emph{well-shaped} pieces of a polyline~$P$ with
edges~$e_1,\dots,e_k$, we extend the approach presented
by Schwartges et al.~\cite{swh-lsime-acm14}. They define the curviness $w(P)$
of $P$ by summing up the bend angles $\alpha_i$ of all incident edge pairs
$e_i$, $e_{i+1}$, i.e., $w(P)=\sum_{i=1}^{k-1}|\alpha_i|$ to determine the best label positions for any given label. We want to locally classify road pieces as well-shaped instead and adapt their idea as follows. 
Let $S$ be a maximal sub-polyline of $P$ with the property that any sub-polyline of $S$ with length at most $l_\text{max}$ has curviness at most $\alpha_\text{max}$. Each such sub-polyline $S$ forms a well-shaped piece of $P$ and they can all be computed in $O(k)$ time. 
This local criterion for well-shapedness is based on the curviness of a fixed-width window sliding along the polyline; it is independent of the label length (similarly to what Mapnik does). 
In our experiments we set~$l_\text{max}$ to twice the length of the
letter \texttt{W} and $\alpha_\text{max}=22.5^\circ$, analogously to the parameters that Mapnik~uses.

A \emph{labeling} $\mathcal L$ for a road network is a set of mutually
non-overlapping, well-shaped labels, where 
 two labels $\ell$ and
$\ell'$ \emph{overlap} if they intersect in a point that is not their
respective head or tail. 
%
Following the criteria~(C1)--(C3), the problem
\MaxTotalCovering is to find a labeling $\mathcal L$ that labels a
maximum number of road sections, i.e., no other labeling 
labels more road sections.  In~\cite{rlTheory} we showed that
\MaxTotalCovering is NP-hard in general and can be solved in
$O(|V|^3)$ time if~$G$ is a tree.

\paragraph{Shortcomings for Real-world Road Networks.}\label{sec:issues} 
While the abstract road graph model allows theoretical insights, we
cannot directly apply it to real-world road networks. Due to the following
issues, we need to invest some effort in a preprocessing phase (see Sect.~\ref{sec:transformation}) to guarantee that the resulting
labels in the text representation do not overlap, look nicely and are embedded in the roads' shapes. 

\ILanes: If lanes run closely in parallel, their drawings in the road
network merge to one thick curve and individual lanes become
indistinguishable. Hence, in our abstract model, such lanes should be aggregated to a single road
section that represents the skeleton of the merged curve, 
and labels should be contained in it; see Fig~\ref{fig:motivation}(c).

\ICrossings: Real-world road networks are not planar, but edges may cross,
namely at tunnels and bridges; see Fig.~\ref{fig:model-issues}(c).
To avoid
overlaps between labels placed on those road sections, we either can model
the intersection as a regular junction of two roads or
we split one into two shorter road sections that do not cross the other road section. In both
cases the road graph becomes planar. For our
prototype we use the first variant (also used by Mapnik), 
because more road sections can be labeled.

\ILong: In real-world road networks some road sections are possibly so long
that the label should be repeated after appropriate distances.

\IFatLabels: Labels have a certain
font size so that when transforming an abstract label curve
into its text representation, labels of different roads may overlap due to their road sections being too close; see Fig.~\ref{fig:model-issues}(d).

\section{Phase 1 -- Construction of Abstract Road Graphs}\label{sec:transformation}

The first phase of our framework consists of transforming the input road network data into an abstract road graph while resolving the four issues mentioned in Sect.~\ref{sec:issues}.
Typically, road networks are
given as a set of polylines that describe the 
roads and road lanes. 
Individual polylines do not necessarily form semantic components such as road sections.
So as a first step, we break all polylines down into individual line segments (whose union forms the road network).
Let $L$ be the set of all these line segments.
We further require that each line segment $l\in L$
is annotated with its \emph{road name} $\lName(l)$, the stroke width
$\lStroke(l)$ and the color $\lColor(l)$ that are used to draw $l$,
and finally the \emph{font size} $\lFont(l)$ that shall be used to
display the name. 
We say that two line
segments $l,l'\in L$ are \emph{equally represented} if
$\lStroke(l)=\lStroke(l')$ and $\lColor(l)=\lColor(l')$. We assume
that $\lFont(l) < \lStroke(l)$ for any~$l$; otherwise we set
$\lStroke(l):=\lFont(l)$.

\begin{figure}[t]
\centering
\includegraphics[scale=1.2]{./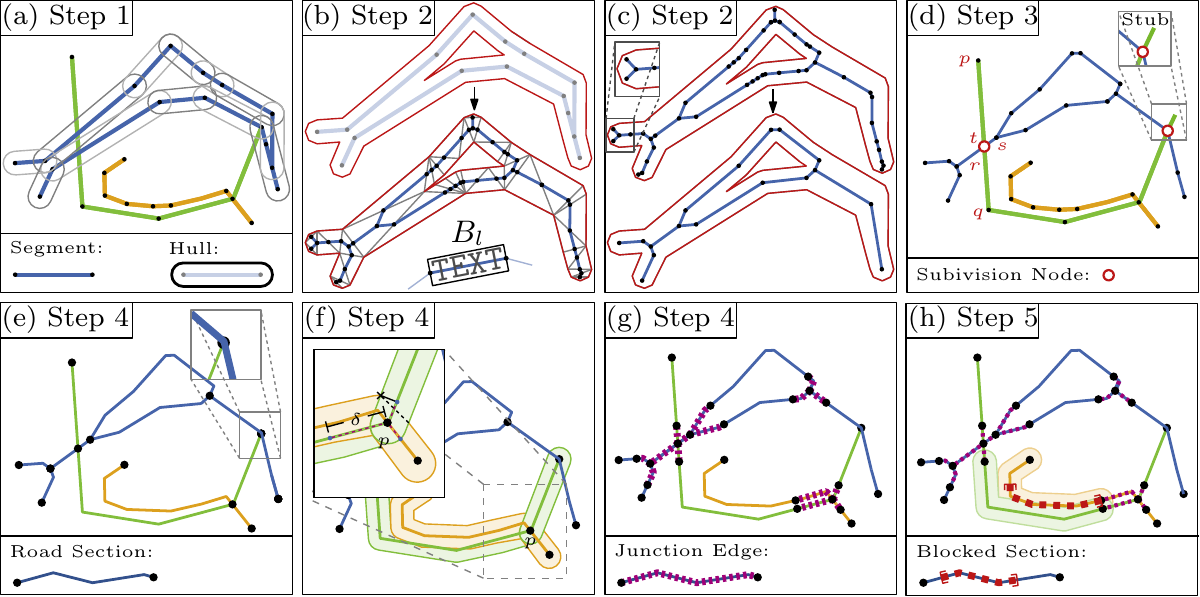}
\caption{Illustration of the steps applied in Phase 1. Segments of the same color have the same road name. For more details see the description of Phase 1. }
\label{fig:workflow}
\end{figure}

The workflow consists of the following five
steps; see Fig.~\ref{fig:workflow}.
\begin{inparaenum}[(1)]
\item \textsc{Identification.} Identify single \emph{road
    components}, i.e., sets of line segments that have the same name,
  are equally represented, and form a connected component.
\item \textsc{Simplification.} Simplify each road component such that lanes  running closely in parallel are aggregated. 
\item \textsc{Planarization.} Replace bridges and tunnels by artificial junctions. 
\item \textsc{Transformation.} Transform the segment representation into an abstract road graph. 
\item \textsc{Resolving Overlaps.} Identify mutual overlaps of road sections  and block them for label placement.
\end{inparaenum}

Below we describe each step in more detail. We define the
\emph{hull} of a line segment~$l\in L$ to be the region of points
whose Euclidean distance to~$l$ is at most $\lStroke(l)$; see
Fig.~\ref{fig:workflow}(a). The hull of a polyline is then the union of its segments' hulls. We approximate hulls by simple polygons.

\textsc{Step 1 -- Identification.}  For each road name $n$, each color
$c$ and each font size $f$ we define the intersection graph of the
hulls of the line segments $L_{n,c,f}=\{l\in L \mid \lName(l)=n,\ \lColor(l)=c \text{ and } \lFont(l)=f\}$. In this
intersection graph each hull is a vertex and two vertices are
connected if and only if the corresponding hulls intersect.  In each
(non-empty) intersection graph we identify all connected components,
which we call \emph{road components}; e.g., in
Fig.~\ref{fig:workflow}(a) the blue segments form a road
component. Thus, based on $L$ we obtain a set~$\mathcal C$ of road
components. By definition, each component $C\in \mathcal C$ has a
unique name~$\lName(C)$, stroke width~$\lStroke(C)$, color
$\lColor(C)$ and font size~$\lFont(C)$.

\textsc{Step 2 -- Simplification.} 
For each road component~$C\in \mathcal C$ we geometrically form the
union of the corresponding hulls. Thus, the result is a simple
polygon~$P$ (possibly with holes); see Fig.~\ref{fig:workflow}(b),
top. This polygon describes the contour of the road component as drawn on the map. We discard all polygons whose area is smaller
than some threshold as they are too small to be labeled; we use the
area of the letter \texttt{W} as threshold. For each remaining polygon $P$ we
construct the \emph{skeleton} of $P$ as a linear representation of the
corresponding road component such that labels centered on the skeleton
are guaranteed to be contained in $P$.  This skeleton is based on the
conforming Delaunay triangulation of the interior of~$P$ following
Bader and Weibel~\cite{Bader97}. For triangles that have one or three
\emph{internal} edges, i.e., edges that do not belong to the boundary
of~$P$, we connect the triangle centroid to the midpoints of the
internal edges. For triangles with two internal edges, we simply
connect the midpoints of these two edges, see
Fig.~\ref{fig:workflow}(b), bottom. From those line segments, we form
a set of maximal polylines by appending all those line segments that
meet at the midpoint of a triangle edge (but not at a triangle
centroid). Since these polylines may consist of many vertices and meander \emph{locally}, we simplify them using the Douglas-Peucker
algorithm, but only if the simplified shortcuts keep a distance of at
least $\lFont(C)/2$ to the boundary of~$P$, see
Fig.~\ref{fig:workflow}(c).  Finally, we delete any segment~$l$ whose
text box $B_l$ is not completely contained in~$P$. Here the \emph{text
  box} $B_l$ of $l$ is defined as a rectangle centered at~$l$ with two
sided parallel to~$l$. These parallel sides have the same length as~$l$, the two orthogonal sides have length~$\lFont(C)$, see
Fig.~\ref{fig:workflow}(b), bottom.  Segments with the text box not
contained in $P$ may occur at the protrusions of the component where
circular arcs are approximated by polylines, see
Fig.~\ref{fig:workflow}(c), top left.  The remaining set of polylines
forms the skeleton of $P$.

Thus, for each road component~$C$ we obtain a skeleton such that all text
boxes of the skeleton edges are contained in $P$. This resolves~\ILanes. We annotate each skeleton edge with the name, stroke width, color and font size of~$C$.

\textsc{Step 3 -- Planarization.} So far polylines of different road
components may intersect at other points than their end points, e.g.,
polylines representing bridges and tunnels may cross other
polylines. As motivated in Sect.~\ref{sec:issues}, we subdivide these
polylines to resolve intersections; see
Fig.~\ref{fig:workflow}(d). More precisely, if two line segments
$\overline{pq}$ and $\overline{rs}$ of two polylines intersect at a point~$t$, we
replace them by the four segments $\overline{pt}$, $\overline{tq}$,
$\overline{rt}$ and $\overline{ts}$. We do the intersection tests
with a certain tolerance to identify $T$-crossings
safely. However, this may yield short stubs that protrude
junctions slightly; we remove those stubs. Thus, this step resolves
\ICrossings and yields a set of annotated
 polylines only intersecting in vertices.

 \textsc{Step 4 -- Transformation.} 
 Next we create the abstract road
 graph from the polylines of the previous step. 
 As a result of Step 3 we know that any two polylines intersect only in vertices. We first take the union of all polylines, identify vertices that are common to two or more polylines and mark these vertices as \emph{junction seeds}. This induces already a planar graph $G=(V,E)$ with polyline edges whose vertices $V$ are either junction seeds or have degree 1. 
 It remains to partition the edges of $G$ into road sections and junction edges. Initially, we mark all edges as road sections. 
 We distinguish two types of junction seeds in $G$.
 
 If a junction seed $v$ has degree at least $3$, only two of its incident edges $e$ and $e'$ belong to the same road $R$ and all other incident edges belong to different roads (and have a different road type than $R$) then we do not create any junction edges at $v$, see Fig.~\ref{fig:workflow}(e), small box.
 Since $R$ is the only road that may use the junction at $v$ and it is visually clear that all other roads end at $v$ we can safely treat $v$ as an internal vertex of a road section of $R$.
 So we disconnect all incident edges of $v$ except $e$ and $e'$ from $v$ and let each of them end at its own slightly displaced copy of $v$. The edges $e$ and $e'$ are merged at $v$ and the new edge remains a road section. This resolves the situation as desired.
 
 For all other junction seeds we create junction edges as follows. Let $v$ be a junction seed and let $E_v$ be the set of edges incident to $v$. We intersect the hulls of all  edges in $E_v$ and project their intersection points onto the corresponding edges, see Fig.~\ref{fig:workflow}(f). For each edge $e \in E_v$ we determine the projection point~$p_e$ that is farthest away from $v$ (in geodesic distance). If the distance between~$p_e$ and $v$ exceeds a given threshold $\delta$, we shift $p_e$ to the point on $e$ that has distance $\delta$ from $v$. Now we subdivide $e$ at~$p_e$ and mark the edge $\{v,p_e\}$ as a junction edge; the other edge at $p_e$ (if non-empty) remains a road section. The threshold $\delta$ ensures that roads running closely in parallel are not completely marked as junction edges. Figure~\ref{fig:workflow}(g) shows the resulting abstract road graph. 

 To resolve \ILong we
 subdivide road sections whose length exceeds a certain threshold (in
 our experiment 350 pixels) by inserting a very short junction edge.

\textsc{Step 5 -- Resolving Overlaps.} By Step 2 the
hulls of edges that belong to the same road component
do not overlap. However, if two sections of different roads run
closely in parallel, their hulls (and hence their labels) may overlap. 
We identify overlaps of
the hulls of non-incident edges in $G$ and block the corresponding
parts of the edge whose road is less
important for placing labels; 
ties are broken arbitrarily. More complex approaches
using road displacement could be applied, however, 
we have chosen a simple solution.  By design hulls
of incident edges may only overlap if both are junction edges; those
overlaps are handled by the labeling algorithms; see
Sect.~\ref{sec:labeling}.  This resolves \IFatLabels.


\section{Phase 2 -- Label Placement in Road
  Graphs}\label{sec:labeling} 
  
  In this section we present the four different methods for solving \MaxTotalCovering that we subsequently evaluate in our experiments in Sect.~\ref{sec:evaluation}. Furthermore, we describe a technique for decomposing road graphs into several smaller, independent components that may speed up computations.

\subsection{Labeling Methods}

\myparagraph{\GreedyAlgo.} An obvious base-line heuristic to obtain lower bounds is to simply place a well-shaped label on each individual road section that is long enough to admit such a label without extending into any junctions. We use this approach to show that it is beneficial to position labels across junctions.
 
\myparagraph{\textsc{Mapnik}.} Mapnik (\url{http://mapnik.org}) is a standard
open source renderer for OpenStreetMap that includes an road labeling algorithm.
The algorithm iteratively labels so-called \emph{ways}, which are
polylines describing line features in OpenStreetMap. Along each way it
places labels with a certain spacing and locally ensures that labels
do not intersect already placed labels of other ways.  It does not use any semantic structure from the road network (e.g., road sections), but
relies on how the contributors of OpenStreetMap modeled single ways. We may run the rendering algorithm and extract all placed labels from its output. 

\myparagraph{\TreeAlgo.}
The tree-based heuristic makes use of our recently proposed algorithm 
that optimally solves \MaxTotalCovering if $G$ is a tree~\cite{rlTheory}. The basic idea for trees is that a
placed label splits the tree into several independent sub-trees, which then are
labeled recursively. Using dynamic programming we reuse already
computed results so that the algorithm's complexity becomes
polynomial, namely $O(|V|^5)$ running time and $O(|V|^2)$ space. Applying
some further intricate modifications we improved this to $O(|V|^3)$ time
and $O(|V|)$ space, and $O(|V|^2)$ time if each road in $G$ is a path. We omit the details and use that algorithm as a \emph{black
  box}. If~$G$ is a tree, our heuristic optimally labels~$G$. Otherwise it computes
a spanning tree~$T$ on~$G$ using Kruskal's algorithm and computes an
optimal labeling for $T$. We
construct~$T$ such that all road sections of~$G$ are contained
in~$T$. Since a road section is only incident to junction edges, this
is always possible. In
Sect.~\ref{sec:evaluation} we show that large parts of realistic road
networks can actually be decomposed into paths and trees without losing
optimality.

\myparagraph{\ILPAlgo.} In order to provide upper bounds for the
evaluation of our labeling algorithms, we implement a mixed-integer linear programming (MILP) model that solves \MaxTotalCovering optimally on arbitrary abstract road
graphs. The basic idea is to discretize all possible label positions and to
restrict the space of feasible solutions to non-overlapping sets of
labels.

We now describe the MILP formulation in detail. To simplify the
presentation, we drop the rather technical concept of
\emph{well-shaped} labels, but note that it can be easily incorporated
into the MILP. In the following let the edges of $G$ be (arbitrarily)
directed.

We first discretize the problem as follows. Two labels $\ell$ and
$\ell'$ are \emph{equivalent} if they cover the same edges in the same
order, i.e., $\mathcal P_\ell=\mathcal P_{\ell'}$, and only their end points differ; see
Fig.~\ref{fig:rules}(b).  For
each such equivalence class we create one label~$\ell$; we denote its
equivalence class by~$\mathcal K_\ell$. Further, let~$L$ denote the
set of such created labels.  The main idea of the MILP is to
select a subset of~$L$ and to determine the exact positions of the
labels' end points on their terminals such that they do not overlap
and label a maximum number of road sections.

Now, consider a label~$\ell\in L$ and the path $\mathcal
P_\ell=(e_1,e_2,\cdots,e_{k-1},e_k)$ that is covered by~$\ell$; see
Fig.~\ref{fig:rules}(b). In the following we call $e_1$ and $e_k$
the \emph{terminals} of~$\ell$ and the others \emph{internal edges}
of~$\ell$.  Assume that the head of the label~$\ell$ lies on $e_1$ and
the tail on $e_k$, then $\ell$ can slide along~$\mathcal P_\ell$
changing the covered road sections until the head or tail of~$\ell$
\emph{hits} an end point of $e_1$ or $e_k$, respectively. At each
position, $\ell$ coincides with an equivalent
label~$\ell'$. Obviously, those labels exactly form $\mathcal
K_\ell$. Further, there exist two positions on~$e_1$ such that the
head of $\ell$ has either minimum geodesic distance $a$ or maximum
geodesic distance $b$ to the source of $e_1$, respectively. We define
the interval~$H_\ell=[a,b]$. Analogously, we define the interval
$T_\ell$ for the tail of $\ell$ and the edge $e_k$.

 For each label~$\ell \in L$ we introduce
the variables~$x_\ell\in\{0,1\}$,~$h_{\ell}\in H_\ell$
and~$t_{\ell}\in T_\ell$, and for each road section $e\in E$ the
variable $y_e\in\{0,1\}$.  We interpret $x_\ell=1$ such
that~$\ell$ is selected for the labeling. The variables~$h_{\ell}$
and $t_{\ell}$ are interpreted as the geodesic distances of the head
and tail to the source of the head's and tail's terminal, respectively; see
Fig.~\ref{fig:rules}(c). We interpret $y_e=1$ as road section~$e$
being labeled and 
maximize the sum $\sum_{e\in E}y_e$ subject to the following constraints.

For each $\ell\in L$ we require
\begin{align}\label{constr:length}
 \covered(e_1,\ell) + \length(e_2)+\ldots+\length(e_{k-1})+\covered(e_k,\ell) = \length(\ell),
\end{align}
where $\mathcal P_{\ell}=(e_1,\dots,e_k)$, $\length(\ell)$ denotes the given
length of~$\ell$ and $\covered(e,\ell)$ is a linear expression
describing what length of $e$ is covered by $\ell$. This expression
depends on which end point of $e$ is covered, whether the head or tail
of~$\ell$ lies on~$e$, and on the position variables $h_\ell$ and $t_\ell$,
respectively; we omit the technical definition.
Further, for each pair $\ell',\ell \in L$ we require
\begin{align}
  x_\ell+x_{\ell'}\leq 1  & \quad\text{ if an edge of $\ell$ is an internal edge of $\ell'$ }\label{constr:overlap1}\\
 \begin{split}
  h_\ell-h_{\ell'} \leq M(2-x_\ell-x_{\ell'}) & \quad\text{ if the heads of~$\ell$ and $\ell'$ lie on a common}\label{constr:overlap2}\\[-.75ex]
  & \quad\text{ terminal~$e$ and $\ell$ covers the source of~$e$.}
 \end{split}
\end{align}
For each road section $e\in E$ and all labels $\ell_1,\dots,\ell_k\in L$ labeling~$e$ we require
\begin{align}
 y_e  \leq x_{\ell_1}+\cdots+x_{\ell_k}\label{constr:counting}
\end{align}
Constraint~(\ref{constr:length}) ensures that each label has the
desired length~$\length(\ell)$. Constraint~(\ref{constr:overlap1})
ensures that a label does not overlap another label internally, i.e.,
it (partly) covers an edge that is completely covered by another
label. Constraint~(\ref{constr:overlap2}) ensures that labels ending
on the same road section do not overlap on that edge, but~$\ell$ ends
on $e$ before~$\ell'$ starts; see Fig.~\ref{fig:rules}(c). Similar
constraints are introduced for the other combinations on how heads and
tails of $\ell$ and $\ell'$ can lie on a common edge, and on whether
source or target of~$e$ is covered. For an appropriate large $M$ the
constraint is trivially satisfied if $\ell$ or $\ell'$ is not selected
for the labeling.  Finally, Constraint~(\ref{constr:counting}) ensures
that road section~$e$ is only counted as labeled, if there is at least
one selected label covering~$e$.

Since $L$ models all possible label positions and the constraints restrict the space of feasible solutions to non-overlapping sets of labels, it is clear that any optimal solution of the above MILP corresponds to an optimal solution of \MaxTotalCovering.

\begin{theorem}
\ILPAlgo solves \MaxTotalCovering optimally.
\end{theorem}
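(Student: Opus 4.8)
The plan is to establish a value-preserving correspondence between feasible labelings of the abstract road graph and feasible assignments of the MILP, and then conclude that their optima coincide. I would phrase this as two directions, so that the optimal objective value $\sum_{e\in E} y_e$ of the MILP equals the maximum number of road sections labelable by any labeling. Optimality of \ILPAlgo then follows immediately: reading off the selected labels of an optimal MILP solution yields a labeling whose value matches the MILP optimum, and no labeling can do better. Two facts make the discretization lossless and should be stated up front: the candidate set $L$ is \emph{complete} in the sense that every possible label is equivalent to some $\ell\in L$ (its covered path $\mathcal P_\ell$ determines its class), and the objective counts road sections rather than labels, so equivalent labels are interchangeable.

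For the first direction, every labeling induces an MILP solution of the same value. Given a labeling $\mathcal L$, I would map each $\ell\in\mathcal L$ to the representative of its class $\mathcal K_\ell\in L$, set that representative's $x$-variable to $1$, and set $h,t$ to the geodesic distances encoding the true head and tail of $\ell$. The point to check is that this is well defined even if two labels of $\mathcal L$ fall into the same class: since equivalent labels cover exactly the same edges in the same order, they label exactly the same road sections, so keeping one representative per occupied class changes neither the set of labeled road sections nor the objective. I would then verify the constraints: (\ref{constr:length}) holds because the chosen positions realize the true length $\length(\ell)$; (\ref{constr:overlap1}) cannot be violated, since if an edge of one selected label were internal to another they would share interior points and hence overlap, contradicting that $\mathcal L$ is a valid labeling; and (\ref{constr:overlap2}) together with its symmetric variants holds because on a commonly used terminal the non-overlapping positions of the two labels are exactly ordered as the big-$M$ inequalities demand. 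Setting $y_e=1$ precisely for the road sections labeled in $\mathcal L$ then satisfies (\ref{constr:counting}) and gives $\sum_e y_e$ equal to the number of labeled road sections.

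For the second direction, every feasible MILP solution induces a labeling of at least the same value. From a feasible $(x,h,t,y)$ I would place, for each $\ell$ with $x_\ell=1$, the concrete label at the position prescribed by $h_\ell$ and $t_\ell$; constraint (\ref{constr:length}) guarantees this is a genuine label of the correct length. The crux is to argue the resulting set is non-overlapping. I would use that two distinct labels can overlap only in one of two ways: either one completely covers an edge that the other also touches, which is forbidden by (\ref{constr:overlap1}); or they meet only on a shared terminal edge, in which case (\ref{constr:overlap2}) and its symmetric counterparts force the head/tail offsets into a non-overlapping order. Hence the selected labels form a valid labeling, and by (\ref{constr:counting}) every road section with $y_e=1$ is labeled by at least one selected label, so this labeling labels at least $\sum_e y_e$ road sections.

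The main obstacle is precisely the non-overlap characterization underlying the second direction: I must argue that (\ref{constr:overlap1}), (\ref{constr:overlap2}), and the symmetric terminal-ordering constraints together are exactly equivalent to the geometric condition that no two selected sub-polylines intersect outside their heads and tails. This reduces to an exhaustive case analysis of how the covered paths $\mathcal P_\ell$ and $\mathcal P_{\ell'}$ can share parts of edges — sharing an edge that is internal to at least one of the two labels, versus sharing only edges that are terminals of both — and to checking that each case is captured by exactly one family of constraints. The well-shapedness restriction, dropped here for clarity, only prunes candidates from $L$ and does not affect either direction of the argument.
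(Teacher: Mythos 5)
Your proposal is correct and takes essentially the same approach as the paper, whose entire proof is the one-sentence assertion immediately preceding the theorem: that $L$ models all possible label positions (your completeness direction) and that the constraints restrict feasible solutions to non-overlapping sets of labels (your soundness direction). Your write-up is a faithful, more detailed expansion of that same correspondence, and you rightly note that full rigor in the second direction rests on the terminal-ordering constraint variants whose technical definitions the paper itself omits.
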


Finding an optimal solution for a MILP formulation is NP-hard in
general and remains NP-hard for the stated formulation, because
\MaxTotalCovering is NP-hard. However, it turns out that in practice we can
apply specialized solvers to find optimal solutions for reasonably sized instances in acceptable time, see Sect.~\ref{sec:evaluation}.

\subsection{Decomposition of Road Networks.}
\begin{figure}[t]
\centering
\includegraphics[page=1,scale=1.2]{./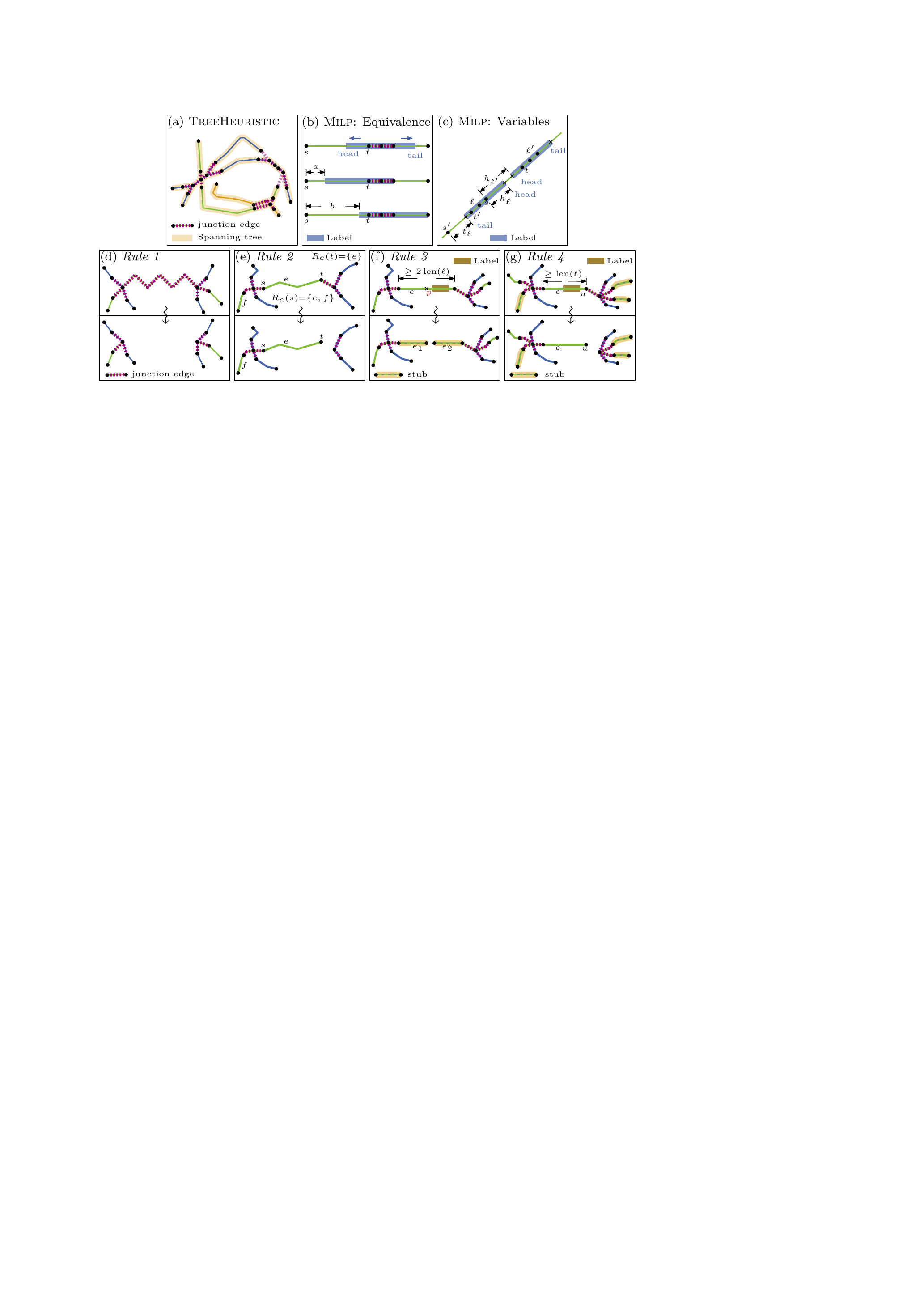}
\caption{Illustration of the algorithms. Edges of the same color belong to the same road.}
\label{fig:rules}
\end{figure}

We may speed up both our heuristic \TreeAlgo and the exact approach \ILPAlgo by decomposing the road graph into smaller, independent components to be labeled separately, i.e., components whose individual optimal solutions compose to a conflict-free optimal solution of the initial road graph.  Such a decomposition allows to compute solutions in parallel with either of the above methods and it further decreases the total combinatorial complexity. The decomposition rules guarantee that the labelings of the components can always be merged without creating any label overlaps.
We name this technique \Shredder.

\textsc{Step 1 -- Decomposition.} For many road sections, e.g., long
sections, of real-world road networks labels can be easily placed
preserving the optimal labeling.  We iterate through the edges of $G$ and cut or remove some of them if one of the following rules applies. As a result the graph decomposes into independent connected
components; see Fig.~\ref{fig:rules}(d)--(g). Let~$e$ be the
currently considered edge and let~$R$ be the road of $e$.

\RuleA. If $e$ is a junction edge and it cannot be completely covered
by a well-shaped label, i.e., $e$ is not well-shaped, then remove $e$.

\RuleB. If $e$ is a road section that ends at a junction that is not
connected to any other road section of~$R$, then detach $e$ from that junction.

\RuleC. If $e$ is a road section, a well-shaped label~$\ell$
fits on~$e$, and $e$ is at least twice as long as~$\ell$, then cut $e$ at its
midpoint.

\RuleD. If $e$ is a road section, a well-shaped label~$\ell$ fits on $e$, and
$e$ is connected to a junction that is only connected to road sections
of~$R$ that may completely contain a well-shaped label, then detach $e$ from that junction.





On each edge we apply at most one rule. If we apply
\RuleC or \RuleD on an edge~$e$, we call $e$ a \emph{long-edge}. Afterwards, we determine all connected components
of the remaining graph~$G'$, which are then independently labeled. 

\textsc{Step 2 -- Label Placement.} For the constructed components we compute solutions in parallel with either of the above methods.

\textsc{Step 3 -- Composition.} Finally, we compose the labelings of
the second step to one labeling. Due to the decomposition, no two
labels of different components can overlap. If a long-edge~$e$ is not
labeled, we place a label on it, which is possible by definition.  We
adapt the algorithms of Step 2 such that they do not count labeled
road sections that were created by \RuleC, but we count the
corresponding long-edge in this step.

\paragraph{Correctness.} We now prove the correctness of the
approach. To that end we first formalize the presented rules. We
assume that the edges of~$G$ are (arbitrarily)
directed. 

\RuleA. If $e$ is a junction edge and it cannot be completely covered
by a well-shaped label, i.e., $e$ is not well-shaped, then remove $e$.

\RuleB. Let~$R_e(v)$ be the set of road sections that belong to the
same road as~$e$, and that are reachable from~$v$ in~$G$ when only
traversing junction edges. If $e=(s,t)$ is a road section and
$R_e(u)=\{e\}$ for an $u\in\{s,t\}$, then remove the junction edge incident
to~$u$.

\RuleC. If $e=(s,t)$ is a road section, a well-shaped label~$\ell$
fits on~$e$, and $e$ is twice as long as~$\ell$, then replace $e$ by
the road sections~$e_1=(s,u_1)$ and $e_1=(u_2,t)$, where $u_1$ and
$u_2$ are two new vertices at the midpoint~$p$ of $e$, $e_1$ is a sub-polygon of~$e$ from
$s$ to $u_i$ and $e_2$ is a sub-polygon of~$e$ from $u_2$ to $t$. We
mark $e_1$ and $e_2$ as \emph{stubs} and call $e$ a \emph{long-edge}.

\RuleD.  If $e=(s,t)$ is a road section, a well-shaped label~$\ell$
fits on~$e$ and for at least one end node~$u\in\{s,t\}$ the road
sections in $R_e(u)\setminus\{e\}$ are all stubs, then remove the
junction edge incident to~$u$. We mark $e$ as \emph{stub} and call $e$
a \emph{long-edge}.

\newcommand{\thmRules}{Let $G$ be an
  abstract road graph and let $\mathcal L$ be the resulting labeling
  after applying \Shredder\ combined with an algorithm that yields optimal labelings. An optimal labeling~$\mathcal L'$ of $G$ and $\mathcal
  L$ label the same number of road sections.  }

\begin{theorem}\label{apx:thm:rules}
\thmRules
\end{theorem}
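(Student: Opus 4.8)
The plan is to argue by induction over the sequence of rule applications that each single application preserves the optimal number of labeled road sections under the counting convention used by the composition step. In that convention a \emph{long-edge} is counted exactly once and is guaranteed to receive a label (composition fills any unlabeled long-edge), whereas the \emph{stub} fragments are never counted. Writing $\mathrm{val}(G)$ for the maximum over feasible labelings of the number of labeled non-stub road sections, plus the (fixed) number of long-edges, two observations make this convention sound: a long-edge always admits its label $\ell$, and adding that label never conflicts with an optimal non-stub labeling (if some label already covers the long-edge region we add nothing, otherwise the region is free). It then suffices to show (i) every rule turns $G$ into $G'$ with $\mathrm{val}(G')=\mathrm{val}(G)$, and (ii) for the final, rule-free graph with components $G_1,\dots,G_k$, optimal per-component labelings compose into a feasible labeling attaining $\sum_i\mathrm{val}(G_i)$, which equals the value of the reduced graph. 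In the base case (no long-edges yet) $\mathrm{val}(G)=\mathrm{OPT}(G)$, so (i) and (ii) together prove the theorem.

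The two removal rules delete an edge that no feasible label can use. For \RuleA the removed junction edge $e$ is not well-shaped, hence no well-shaped label can completely cover it; since a label either has $e$ as an internal (and thus completely covered) edge or must end on a road section rather than on a junction edge, no label uses $e$, and its deletion leaves every labeling unchanged. For \RuleB we have $R_e(u)=\{e\}$, so traversing only junction edges from the endpoint $u$ one never reaches a road section of $R$ other than $e$. A label of $R$ covering the junction edge at $u$ would have to continue through junction edges to another road section of $R$ in order to terminate (it cannot end on a junction edge and, being simple, cannot return to $e$); no such section exists, so the junction edge is unused and its removal preserves $\mathrm{val}$.

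The heart of the proof lies in the long-edge rules. For \RuleC the decisive geometric fact is that any label covering the midpoint $p$ of $e$ lies entirely inside $e$: parametrizing $e$ by arc length on $[0,2m]$ with $p$ at $m$, and using $m\ge\length(\ell)$ from $\length(e)\ge 2\length(\ell)$, a covered interval of length $\length(\ell)$ whose interior contains $p$ is contained in $(0,2m)$ and cannot reach either junction. Thus at most one label can cross $p$ (non-overlap rules out two), and it labels only $e$. Both inequalities follow: from an optimal labeling of $G$ (in which $e$ is necessarily labeled, since an unlabeled $e$ would be entirely free and could host $\ell$) we delete the at most one crossing label and cut at $p$, obtaining a feasible labeling of $G'$ that loses only $e$, which the long-edge then restores, giving $\mathrm{val}(G')\ge\mathrm{val}(G)$; conversely an optimal labeling of $G'$, re-glued at $p$ (two labels abutting at $p$ meet only in that shared head/tail point, which is permitted) and augmented by the guaranteed label on $e$, is feasible for $G$ and yields $\mathrm{val}(G)\ge\mathrm{val}(G')$. \RuleD is analogous, with one extra ingredient: $e$ fits $\ell$ and becomes a guaranteed long-edge, while every road section of $R$ reachable from $u$ through junction edges other than $e$ is a stub, hence part of a guaranteed long-edge. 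Crucially each such stub has length at least $\length(\ell)$, whereas a label has length exactly $\length(\ell)$; therefore a label covering part of $e$ and crossing the junction at $u$ can cover at most part of a stub and can never traverse one completely to reach a further, uncounted road section. Hence every such crossing label labels only $e$ and stubs, all of whose contributions composition guarantees independently; deleting all crossing labels loses no counted section, and detaching $e$ at $u$ preserves $\mathrm{val}$.

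Finally, in the rule-free graph each cut or removal was shown to be crossable only by labels whose deletion costs no counted road section, so an optimal labeling of each component, taken in isolation, is already overlap-free across components: the sole possible inter-component contacts occur at a \RuleC midpoint (where two labels may share only the point $p$ as a common head/tail) or are absent altogether at a removed junction edge (whose endpoints lie in different components). Composition then fills every still-unlabeled long-edge, which fits by construction and cannot conflict since such a long-edge is entirely uncovered. Summing gives $\sum_i\mathrm{val}(G_i)=\mathrm{val}(G)=\mathrm{OPT}(G)$ by the inductive equalities, so $\mathcal L$ and $\mathcal L'$ label equally many road sections. I expect the main obstacle to be exactly the bookkeeping that ties (i) and (ii) together: establishing for \RuleD that crossing labels touch only guaranteed long-edges and stubs in the presence of stubs produced by earlier rules (this is where the stub-length bound $\ge\length(\ell)$ is indispensable), and verifying that re-gluing across cut points never creates a forbidden interior intersection once several rules have interacted.
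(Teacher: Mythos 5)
Your proof is correct and follows the same basic route as the paper's: a rule-by-rule argument that each application of \RuleA--\RuleD preserves the optimal number of countable road sections, using the same core observations (no label can touch the edges removed by \RuleA/\RuleB; a label crossing into a \RuleC edge from either end cannot reach the midpoint; crossing labels at a \RuleD junction can be sacrificed because labels fitting entirely on $e$ and on the stubs can be placed instead). However, you go beyond the paper in two ways that are worth noting. First, you make the bookkeeping explicit via the value function $\mathrm{val}(G)$ (non-stub labeled sections plus long-edges), prove both inequalities for each rule, and give an actual argument for the composition step — the paper only proves, in effect, one direction per rule and merely asserts that labelings of different components cannot overlap. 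Second, and more substantively, your stub-length observation for \RuleD (every stub has length at least $\length(\ell)$, so a label of that road entering a stub across the junction can never traverse it completely) closes a gap that the paper's proof leaves implicit: the paper's case analysis for \RuleD silently assumes that a crossing label $\ell$ labels only $e$ and one edge $e'\in R_e(u)\setminus\{e\}$, but without your length bound $\ell$ could a priori run through $e'$ into a further, counted road section outside $R_e(u)$, in which case removing $\ell$ would not be cost-free. The same bound also justifies that at most one stub is entered. So your proposal is not just a match for the paper's argument; it is a tighter version of it.
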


\begin{proof}
  Let~$G=(V,E)$ be an abstract road graph and let $\mathcal L'$ be an
  optimal labeling of $G$, i.e., no more road sections can be labeled.
  We show that we can transform $\mathcal L'$ into a labeling
  $\mathcal L$ that is found by \Shredder, and, furthermore, $\mathcal
  L$ and $\mathcal L'$ label the same number of road sections. If not
  mentioned otherwise, we assume a label to be well-shaped.
 
  \RuleA. Assume that we apply \RuleA on $G$ by deleting a junction
  edge~$e$ that cannot be completely covered by a well-shaped
  label. By definition no label may end on a junction edge, but it
  must end on a road section. Thus, in any labeling the edge $e$
  cannot be covered by any label. We therefore can delete the edge
  preserving the optimal labeling, i.e., an optimal labeling of $G$
  and $G'=(V,E\setminus\{e\})$ label the same number of road sections.

  \RuleB. Assume that we apply \RuleB on the edge~$e$. Since $e$ is
  the only edge in $R_e(u)$, the edge is the end of a road, i.e., all
  other edges incident to $u$ cannot belong to the same road of~$e$. Since $e$ is a junction edge, no label may end on a junction
  edge, and labels may only cover edges of the same road, no label can
  cover~$e$ in any labeling. We therefore can delete the edge preserving the optimal labeling.

  \RuleC. Assume that we apply \RuleC on the road section~$e=(s,t)$
  splitting~$e$ into the edges~$e_1=(s,u_1)$ and $e_2=(u_2,t)$.
  Since $e$  may
  contain a well-shaped label, the road section $e$ must be labeled in
  $\mathcal L'$.

  If~$e$ is only labeled by labels that are completely
  contained in $e$, i.e., they do not cover other edges of~$G$, we
  will find one of those labels in the composition step of \Shredder.

  Hence, assume that there is a label~$\ell_1 \in \mathcal L'$ that
  covers $e$ and $s$. Since $e$ is twice as long as the label length
  of~$e$, this label cannot cover the location of $u_1$($u_2$). The
  same applies for a label $\ell_2\in \mathcal L$ that covers~$e$
  and~$t$. Since $e$ is labeled by $\ell_1$ ($\ell_2$) we can remove
  all other labels that only label~$e$ without changing the maximum
  number of labeled road sections. Hence, the point at $u_1$ is not
  covered by any label, which means we can split $e$ at this
  point preserving the optimal labeling.

  \RuleD. Assume that we apply \RuleD on the road section~$e=(s,t)$
  with $u=s$; same arguments hold for $u=t$. Hence, the road sections
  in $R_e(u)\setminus\{e\}$ are all stubs, i.e., well-shaped labels can placed
  on any of these road sections. Let~$j$ be the junction edge that is
  connected to~$s$. Assume that there is a label~$\ell$ that labels
  $e$ and an edge~$e'$ of $R_e(u)\setminus\{e\}$ such that $u$ is covered~ by~$\ell$.

  If~$e$ and $e'$ are also labeled by other labels, we can
  remove~$\ell$ without changing the number of labeled road sections
  and remove~$j$.  So assume that $e$ is not labeled by another
  label. In that case we remove~$\ell$ and place a label that
  completely lies on $e$ without covering any other edges; by
  definition of the rule this is possible. If~$e'$ is also not labeled
  by any other label, we also place a label on $e'$, which is
  possible, because~$e'$ is a stub. Hence, we can remove~$j$ preserving the optimal labeling. 
\end{proof}

\section{Evaluation}\label{sec:evaluation}

We evaluate our framework and in particular the performance of our new tree-based labeling heuristic by conducting a set of experiments on the road networks of 11 North American and European cities; see~Table~\ref{table:instances16}.
While the former ones are characterized by
grid-shaped road networks, the latter ones rarely posses such
regular geometric structures. 
Since the road networks in rural areas are
much sparser than those of cities, we refrained from considering these networks and focused on the more complex city road networks.  
We extracted the abstract road graphs from the data provided by
OpenStreetMap\footnote{\url{http://www.openstreetmap.org}}. We applied the spherical
Mercator projection ESPG:3857, which is also known as \emph{Web
  Mercator} and used by several popular map-services.  We considered
the three scale factors 4.773, 2.387 and 1.193, which approximately
correspond to the map scales 1:15000, 1:8000, 1:4000\footnote{\url{http://wiki.openstreetmap.org/wiki/Zoom\_levels}}. Further, they correspond to the
\emph{zoom levels} 15, 16 and 17, respectively, which are widely used
by map services as OpenStreetMap. Those zoom levels show road networks
in a size that already allows labeling single road sections, while the
map is not yet so large that it becomes trivial to label the
roads. We applied the standard drawing style for OpenStreetMap, which
in particular includes the stroke width and color of roads as well
as the font size of the labels. Further, this specifies for each zoom
level the considered road categories; the higher the zoom level the
more categories are taken into account.

Our implementation is written in C++ and compiled with GCC 4.8.4 using
optimization level~\texttt{-O3}. MILPs were solved by
Gurobi\footnote{\url{http://www.gurobi.com}} 6.0. The experiments were performed on a 4-core
Intel Core i7-2600K CPU clocked at 3.4 GHz, with 32 GiB RAM. The
\Shredder-approach labels single components in parallel. For computing
the Delaunay triangulation we used the library
Fade2d\footnote{\url{http://www.geom.at}}.

For each city and each zoom level we applied the algorithms
\GreedyAlgo, \TreeAlgo, \Shredder+\TreeAlgo, \ILPAlgo and
\Shredder+\ILPAlgo. We adapted the algorithm such that short road
sections (shorter than the width of the letter \texttt{W}) are not
counted, because they are rarely
visible. Further, we let Mapnik (Version 3.0.9) render the same
input. For each label we identified for each of its letters the
closest road section~$r$ with the same name and counted it as
labeled. Since Mapnik does not optimize the labeling by the same
criteria as we do, we compensate this by also counting neighboring road sections as labeled if the junction in between them is not incident to any other road section. This accounts for those long road sections that we split artificially to resolve~\ILong.

\begin{table}[t]
  \caption{Statistics for Baltimore (BA), Berlin (BE), Boston (BO), Los Angeles (LA), London (LO), Montreal (MO),   Paris (PA), Rome (RO), Seattle (SE), Vienna (VI) and Washington (WA) for zoom level 15, 16 and 17. \emph{OSM}: Number of input segments in thousands. \emph{Segm.:} Percentage of segments after Phase 1, Step 3 in relation to input segments.
    \emph{Graph}: Number of road sections after Phase 1 in thousands. \emph{Time:} Running time for Phase 1.
  }
\label{table:instances16}
\label{table:instances15}
\label{table:instances17}
\centering
\small
\begin{tabular}{llccccccccccc}
\toprule
 && BE & LO & PA & RO & VI & BA & BO & LA & MO & SE & WA \\
\midrule
 \parbox[t]{3mm}{\multirow{4}{*}{\rotatebox[origin=c]{90}{Zoom 15}}} &OSM & 143.9 & 437.6 & 225.1 & 87.7 & 85.1 & 196.1 & 174.5 & 257.1 & 134.6 & 315.3 & 82.2\\
&Segm.& 62 & 80 & 65 & 66 & 63 & 52 & 54 & 74 & 78 & 70 & 39\\
&Graph & 28.5 & 78.5 & 35.3 & 10.3 & 14.8 & 24.7 & 20.1 & 61.3 & 31.9 & 63.1 & 8.7\\
&Time & 16 & 62 & 28 & 10 & 10 & 22 & 19 & 42 & 20 & 40 & 8\\
\midrule
\parbox[t]{3mm}{\multirow{4}{*}{\rotatebox[origin=c]{90}{Zoom 16}}}
&OSM & 225.0 & 563.4 & 292.5 & 117.0 & 119.9 & 332.1 & 225.0 & 327.0 & 161.4 & 433.1 & 103.9\\
&Segm. & 55 & 73 & 62 & 62 & 54 & 40 & 50 & 67 & 72 & 59 & 37\\
&Graph & 37.9 & 105.4 & 49.9 & 15.4 & 18.9 & 33.8 & 27.8 & 80.6 & 40.2 & 77.1 & 11.4\\
&Time & 21 & 65 & 32 & 12 & 11 & 28 & 21 & 44 & 21 & 42 & 9\\
\midrule
\parbox[t]{3mm}{\multirow{4}{*}{\rotatebox[origin=c]{90}{Zoom 17}}}
&OSM & 225.0 & 563.4 & 292.5 & 117.0 & 119.9 & 332.1 & 225.0 & 327.0 & 161.4 & 433.1 & 103.9\\
&Segm. & 64 & 80 & 69 & 70 & 60 & 46 & 56 & 73 & 83 & 64 & 43\\
&Graph & 47.1 & 127.0 & 59.1 & 19.4 & 22.3 & 39.5 & 32.3 & 90.4 & 47.4 & 87.9 & 13.0\\
&Time & 24 & 67 & 33 & 13 & 11 & 29 & 22 & 46 & 22 & 43 & 10\\
\bottomrule
\end{tabular}
\end{table}

The raw data of our experiments is made available on
\href{http://i11www.iti.kit.edu/roadlabeling}{\texttt{i11www.iti.kit.edu/roadlabeling}}. On this page we also
provide interactive maps of the cities Berlin, London, Los Angeles and
Washington, which present the computed labelings.

\begin{table}[t]
  \caption{\emph{Speedup}: Ratio of running times of two algorithms. \emph{Quality:} Ratio of the number of labeled road sections computed by two algorithms. }
\label{table:speedup}
\centering
\small
\begin{tabular}{lccccccccccccc}
\toprule
 & Ratio & BE & LO & PA & RO & VI & BA & BO & LA & MO & SE & WA & \textbf{Avg.}\\ \midrule
\parbox[t]{1mm}{\multirow{3}{*}{\rotatebox[origin=c]{90}{Speedup\hspace{1.5ex}}}}
 & $\frac{\ILPAlgo}{\Shredder+\ILPAlgo}$ & 3.44 & 3.07 & 2.51 & 1.71 & 3.12 & 1.44 & 2.33 & 1.3 & 1.79 & 3.1 & 1.32 & \textbf{2.29}\vspace{1ex}\\
 & $\frac{\TreeAlgo}{\Shredder+\TreeAlgo}$ & 1.77 & 1.8 & 1.73 & 1.62 & 1.71 & 1.57 & 1.71 & 1.37 & 1.75 & 1.68 & 1.35 & \textbf{1.64}\vspace{1ex}\\
 & $\frac{\Shredder+\ILPAlgo}{\Shredder+\TreeAlgo}$ & 2.82 & 2.32 & 3.33 & 2.54 & 2.74 & 6.84 & 3.06 & 21.59 & 6.36 & 5.32 & 10.59 & \textbf{6.14}\vspace{1ex}\\
\midrule
 \parbox[t]{1mm}{\multirow{5}{*}{\rotatebox[origin=c]{90}{Quality\hspace{5ex}}}}
  & $\frac{\Shredder+\TreeAlgo}{\TreeAlgo}$ & 1.01 & 1.0 & 1.0 & 1.0 & 1.01 & 1.01 & 1.0 & 1.01 & 1.02 & 1.01 & 1.02 & \textbf{1.01}\vspace{1ex}\\
 & $\frac{\Shredder+\TreeAlgo}{\ILPAlgo}$ & 1.0 & 1.0 & 0.99 & 0.99 & 0.99 & 0.96 & 0.99 & 0.96 & 0.97 & 0.97 & 0.91 & \textbf{0.97}\vspace{1ex}\\
 & $\frac{\text{Mapnik}}{\ILPAlgo}$ & 0.74 & 0.85 & 0.83 & 0.91 & 0.76 & 0.71 & 0.8 & 0.62 & 0.61 & 0.8 & 0.68 & \textbf{0.75}\vspace{1ex}\\
 & $\frac{\GreedyAlgo}{\ILPAlgo}$ & 0.58 & 0.49 & 0.4 & 0.38 & 0.48 & 0.39 & 0.42 & 0.39 & 0.46 & 0.37 & 0.24 & \textbf{0.42}\vspace{1ex}\\
 & $\frac{\Shredder+\TreeAlgo}{\text{Mapnik}}$ & 1.36 & 1.19 & 1.2 & 1.09 & 1.29 & 1.37 & 1.25 & 1.55 & 1.58 & 1.21 & 1.33 & \textbf{1.31}\vspace{1ex}\\
\bottomrule
\end{tabular}
\end{table}

\begin{figure}[t]
\centering
\subfigure[]{\includegraphics[scale=0.8]{./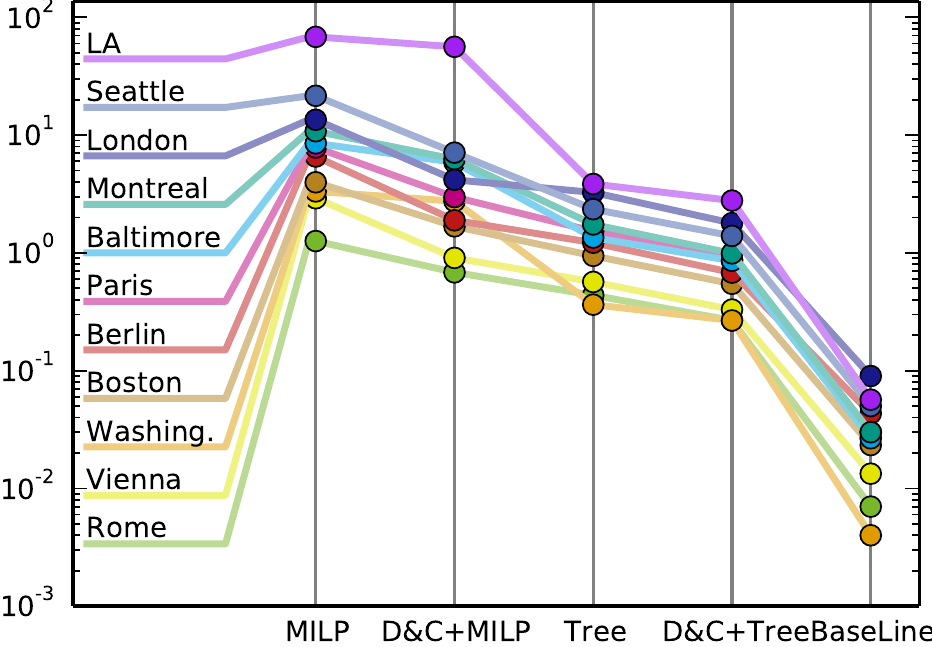}\label{plot:all:time}}
  \centering \subfigure[]{\includegraphics[scale=0.8]{./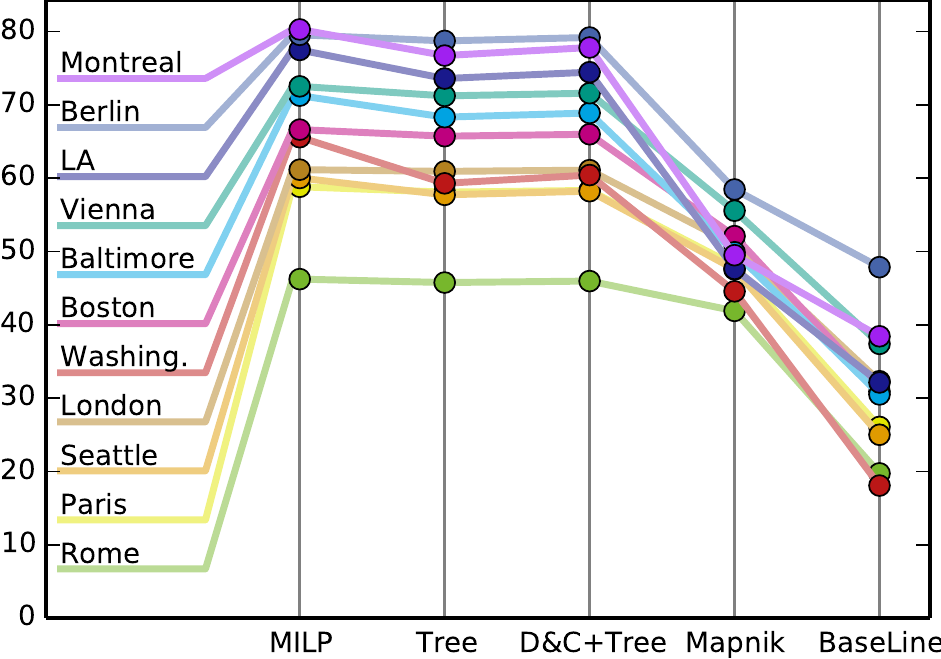}\label{plot:all:rs}}

  \caption{ 
 \subref{plot:all:time}  Running times in seconds of the
    algorithms (logarithmic scale). 
 \subref{plot:all:rs} Percentage of labeled road sections over all  zoom levels broken into the different  algorithms. 
}\vspace{-2ex}
\end{figure}

\textit{Phase 1.}   With a maximum of 67 seconds (London,
zoom 17) and 27 seconds averaged over all instances, Phase 1 can be
applied on large instances in reasonable time. During Phase 1 the number of segments
is reduced to between $40 \%$ and $83 \%$ of the original instance (measured after Step 3, before creating junction edges); see
Table~\ref{table:instances15}. This clearly indicates that the
procedure aggregates many lanes, since by design the approach does not
change the overall geometry, but the simplification maintains the shape of the original network. This is also confirmed by the
labelings; see Fig.~\ref{fig:motivation}(b)--(c) and interactive maps.

\textit{Phase 2, Running Time.} We first consider the average
running times over all zoom levels; see Fig.~\ref{plot:all:time}. We
did not measure the running times of Mapnik, because its labeling
procedure is strongly interwoven with the remaining rendering
procedure, which prevents a fair comparison. As to be expected \ILPAlgo is the slowest method (max.\ 126
sec., Los Angeles, ZL 15), while \GreedyAlgo is the fastest procedure
(max.\ 0.17 sec.). Combining \ILPAlgo with \Shredder\ results in an
average speedup of 2.29 over all instances and a maximum speedup of
3.44; see Table~\ref{table:speedup}. 

The algorithm \TreeAlgo needs less than 4.7 seconds and its median is
about 1.3 seconds.  Hence, despite its worst-case cubic asymptotic
running time, it is fast in practice. Similar to \ILPAlgo, it is
further enhanced by combining it with~\Shredder\ for a speedup of 1.64
with respect to~\TreeAlgo, and an average speedup of 6.14 with respect
to \Shredder+\ILPAlgo; see Table~\ref{table:speedup}. In the latter
case it has even a maximum speedup of about 21.6.  Since decomposing
and composing the labelings is done sequentially, the theoretically
possible speed up using~\Shredder\ is not achieved.
 

\begin{figure}[t]
\centering
\includegraphics[scale=0.8]{./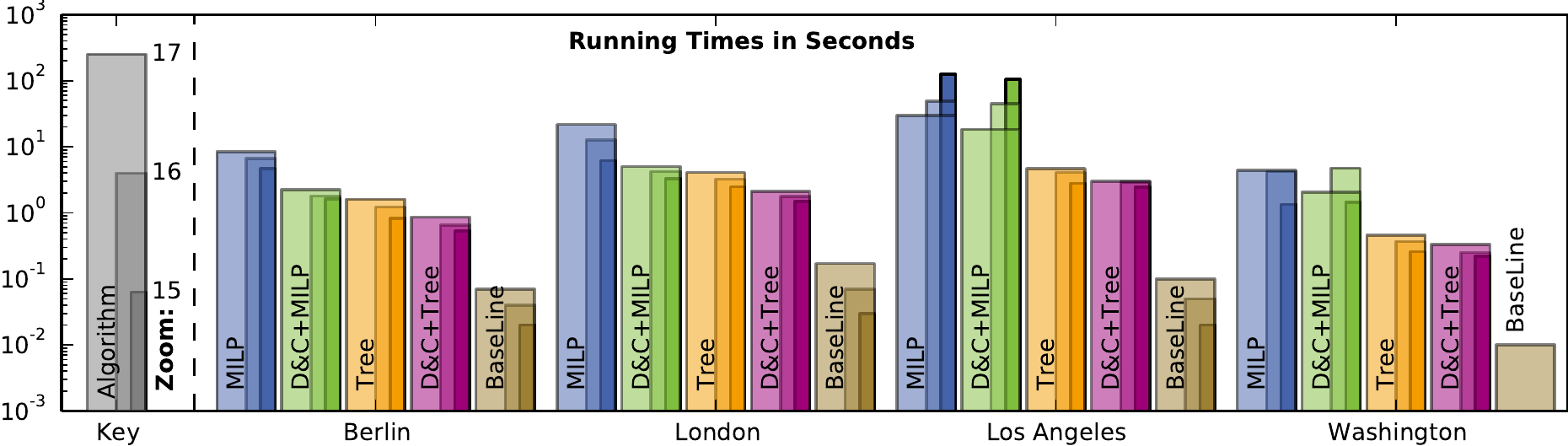}
\caption{Running times of the algorithms broken down in zoom levels and algorithms. The width of the bars (thin, mediate, wide) corresponds with zoom levels (15, 16, 17). 
}
\label{fig:running-time:selection}
\end{figure}

\begin{figure}[t]
\centering
\includegraphics[scale=0.8]{./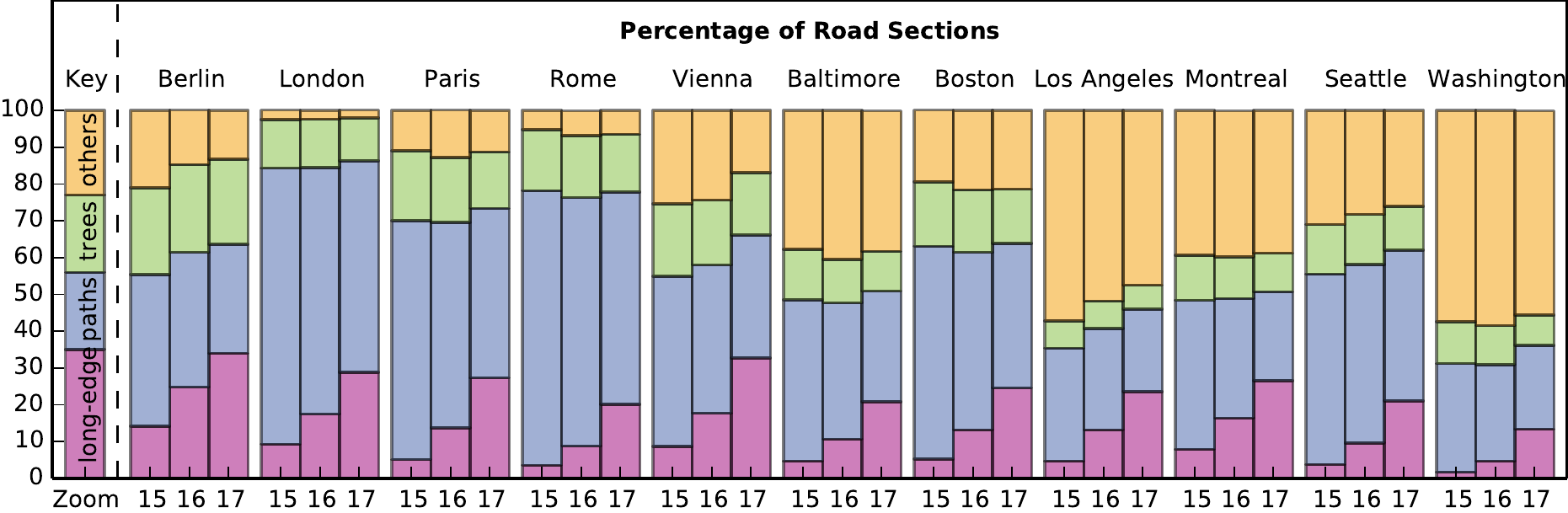}
\caption{
Decomposition of road networks by \Shredder. Percentage of long-edges, road sections in paths, trees and other components in which the road networks is decomposed.   }
\label{fig:components}
\end{figure}

If we break down the running times into single zoom levels, we observe
similar results; see e.g.,
Fig.~\ref{fig:running-time:selection}. Since with increasing zoom level the instance size grows, for most of the algorithms also the
running time increases. Only for North American cities and \ILPAlgo we
observe that the running time for instances of smaller zoom levels are
higher than for larger zoom levels.

\begin{figure}[t]
\centering
\subfigure[]{\includegraphics[scale=0.8]{./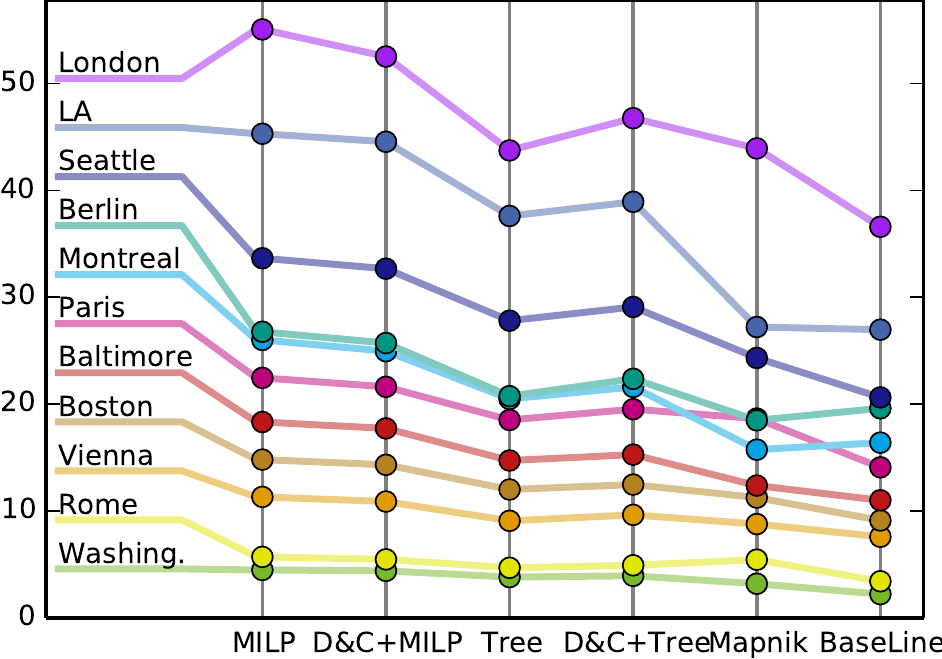}\label{apx:plot:labels}}
\subfigure[]{\includegraphics[scale=0.8]{./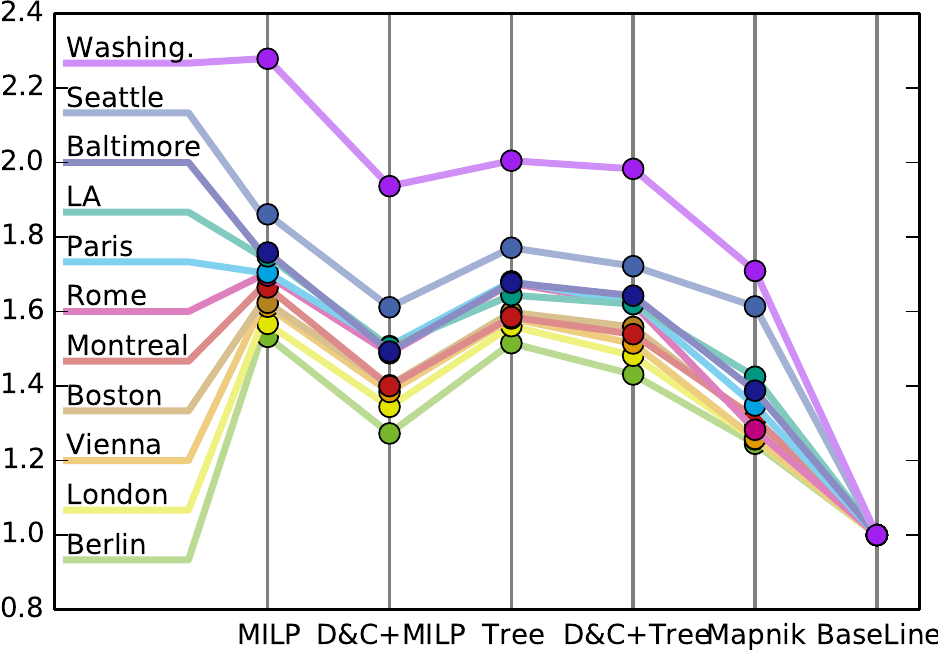}\label{apx:plot:rs-labels}}
\caption{\subref{apx:plot:rs-labels} Number of placed labels in thousands. \subref{apx:plot:rs-labels} Ratio of labeled road sections and placed labels: ${(\#\text{Labeled Road Sections})}/{(\#\text{Labels})}$.
}
\label{apx:fig:number-of-labels}
\end{figure}

\begin{figure}[t]
\centering
\includegraphics[scale=0.8]{./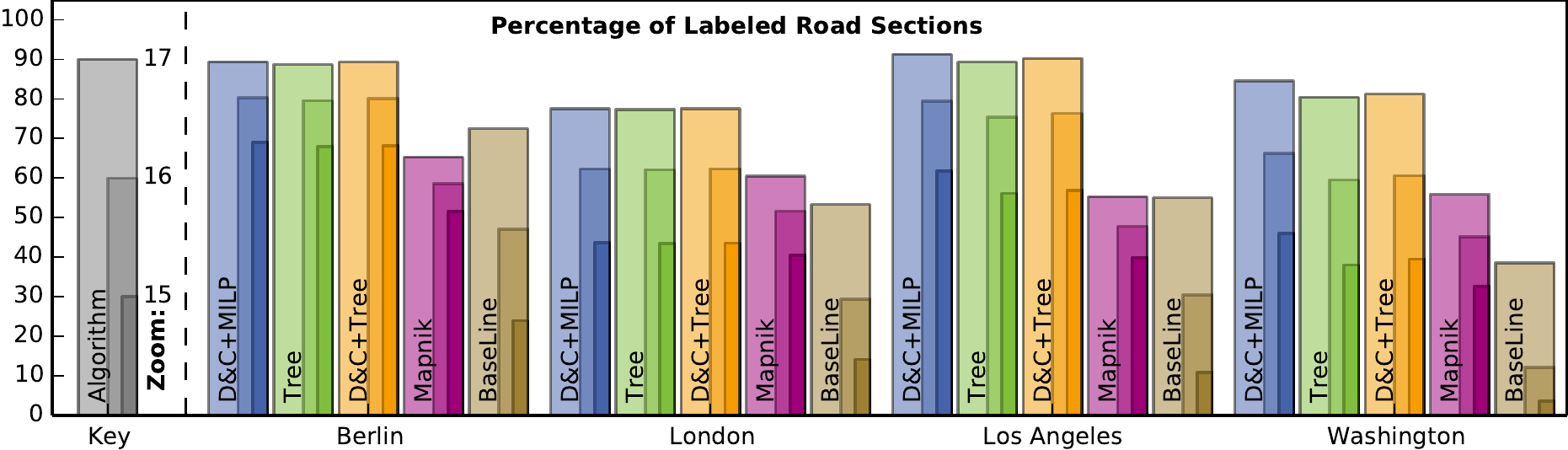}
\caption{Percentage of labeled road sections broken down in zoom levels and algorithms. The width of the bars (thin, medium, wide) corresponds to the zoom level  (15, 16, 17). 
}
\label{fig:quality:selection}
\vspace{-2ex}
\end{figure}

\textit{Phase 2, Quality.} First we analyze the average percentage of labeled
road sections over the three zoom levels; see
Fig.~\ref{plot:all:rs}. As an upper bound, \ILPAlgo, which provably 
solves \MaxTotalCovering optimally, yields results from $46.2\%$ (Rome) to
$80.3\%$ (Montreal). Considering zoom levels independently, we obtain
a minimum of $27.5\%$ (Rome, ZL 15) and a maximum of  $91.7\%$ (Montreal, ZL 17). We think that the wide span is attributed to the different
structures of road networks and road names, e.g., Rome has a lot of
short alleys and long road names. Hence, many road components are too
short or convoluted to contain a single label. Abbreviating road names
could help to overcome this problem.

The algorithm \Shredder+\TreeAlgo yields marginally better results
than \TreeAlgo, but only $1\%$ on average, see
Table~\ref{table:speedup}. Comparing \Shredder+\TreeAlgo with \ILPAlgo
we observe that \Shredder+\TreeAlgo yields near-optimal results with
respect to our road-section based model.  On average it reaches $97\%$
of the optimal solution; see Table~\ref{table:speedup}. While the
quality ratio is only $91\%$ for Washington, more than half of the
instances are labeled with a quality ratio of $\ge 99\%$.  For
European cities the percentage of road sections that belong to
components that are optimally solved by \TreeAlgo (long edges, paths,
and trees) is notably higher than those for North American cities; see
Fig.~\ref{fig:components}. Nonetheless, we obtain similar percentages
of labeled road sections for North American Cities. Hence, the
heuristic computing a spanning tree of non-tree components is both
fast and yields near-optimal results. The additional implementation
effort of \TreeAlgo is further justified by the observation that the
naive way to place labels only on single road sections lags far
behind; only $42\%$ on average, $58\%$ as maximum and $24\%$ as
minimum compared to the optimal solution.  Mapnik achieves on average
$75\%$ of the optimal solution and a maximum of $91\%$. For more than
the half of the instances Mapnik achieves at most $76\%$ of the
optimal solution. So in direct comparison, \Shredder+\TreeAlgo labels
$31\%$ more road sections than Mapnik on average. 
Moreover, \Shredder+\TreeAlgo has a better utilization of labels and achieves an average ratio of 1.61 labeled road sections per label, compared to Mapnik with a ratio of 1.37; see Fig.~\ref{apx:fig:number-of-labels}.


With increasing zoom level the number of
labeled road sections is increased, which is to be expected, since
more road sections become long-edges; see
Fig.~\ref{fig:quality:selection} for four cities (similar results apply for the others) and
Fig.~\ref{fig:components}. For each zoom level, we observe
similar results as described before: \TreeAlgo and \Shredder+\TreeAlgo
achieve near-optimal solutions and
Mapnik labels considerably fewer road sections. However, for smaller zoom
levels the gap between \ILPAlgo and Mapnik shrinks. 

From a visual perspective, labels lie on the skeleton of the road
network, which is achieved by design; see Fig.~\ref{fig:motivation}(c)
and the interactive maps. Instead of unnecessary repetition of labels,
labels are only placed if they actually convey additional
information. In particular, visual components are labeled, but not
single lanes that are indistinguishable due to the zoom level.

\section{Conclusion}
We introduced a generic framework for labeling road maps based on an
abstract road graph model that is combinatorial rather than geometric.
We showed in our experimental evaluation that our proposed heuristic
for decomposing the road graph into tree-shaped subgraphs and labeling
those trees provably optimally is both efficient and effective.  It
has running times in the range of seconds to one minute even for large
road networks such as London with more than 100,000 road sections and
achieves near-optimal quality ratios (on average 97\%) compared to
upper bounds computed by the exact method \ILPAlgo.  Our algorithm
clearly outperforms the labeling algorithm of the standard OSM
renderer Mapnik, with an average improvement in labeled road sections
of $31\%$.  Interestingly, \ILPAlgo is able to compute mathematically
optimal solutions within a few minutes for all our test instances,
even though it is slower by a factor of about 6 compared to the
tree-based algorithm. So for practical purposes there is a trade-off
between a final, but rather small improvement in quality at the cost
of a significant and by the very nature of \ILPAlgo unpredictable
increase in running time. We only implemented essential cartographic
criteria to evaluate the algorithmic core of our framework; further
criteria (e.g., abbreviated names) and alternative definitions of road sections can be easily incorporated.
The framework can further be pipelined with labeling
  algorithms for other map features, e.g., after placing labels for point features,
  one may block all parts of the road network covered by a point label and label the remaining road
  network such that no labels overlap. While this allows to label
  different types of features sequentially, constructing a labeling of
  all features in one single step remains an important open problem.

%
%

\medskip
\noindent\textbf{Acknowledgment.} We thank Andreas Gemsa for many interesting
and inspiring discussions, and his help on the implementation.

{
%

}
\end{document}